\newcommand{\stkout}[1]{\ifmmode\text{\sout{\ensuremath{#1}}}\else\sout{#1}\fi}
\newtheorem{theorem}{Theorem}\theoremstyle{definition}
\newtheorem{proposition}{Proposition}
\newtheorem{definition}{Definition}
\newcommand{\Aut}{\mathsf{Aut}}
\newcommand{\PL}{\mathsf{PL}}
\newcommand{\BLTA}{\mathsf{BLTA}}
\DeclareMathOperator{\dec}{dec}
\DeclareMathOperator{\adec}{adec}
\newcommand{\fixme}[2]{\ifx&#2&{\leavevmode\color{red}#1}\else{\leavevmode\color{red}FIXME\{}#1{\leavevmode\color{red}\}}\footnote{{\leavevmode\color{red}#2}}\PackageWarning{Fixme}{#1: #2}\fi}
\definecolor{matlab1}{rgb}{0.000, 0.447, 0.741} % blue
\definecolor{matlab2}{rgb}{0.850, 0.325, 0.098} % red
\definecolor{matlab3}{rgb}{0.929, 0.694, 0.125} % yellow
\definecolor{matlab4}{rgb}{0.494, 0.184, 0.556} % violet
\definecolor{matlab5}{rgb}{0.466, 0.674, 0.188} % green
\definecolor{matlab6}{rgb}{0.301, 0.745, 0.933} % light blue
\definecolor{matlab7}{rgb}{0.635, 0.078, 0.184} % dark purple
\begin{document}
	%\sg{Automorphism-Based Dynamic Frozen-Function Design for RM Codes // with SCL Decoding}
	\title{Dynamic Frozen-Function Design for Reed-Muller Codes with Automorphism-Based Decoding}%SCL Decoding of
	
	\author{Samet Gelincik,~\IEEEmembership{Member,~IEEE},	Charles Pillet,~\IEEEmembership{Member,~IEEE,}		and~Pascal Giard,~\IEEEmembership{Senior Member,~IEEE}% <-this % stops a space
		%\thanks{Manuscript received 13 October 2022; revised 19 November 2022 and 8 December 2022; accepted 10 December 2022. The associate editor coordinating the reviewwas Y. Fang. }%
        \thanks{Samet Gelincik is with EURECOM, Sophia-Antipolis, France (email: samet.gelincik@eurecom.com).}% <-this % stops a space
		\thanks{Charles Pillet and Pascal Giard are with LaCIME, Department of Electrical Engineering, École de technologie supérieure, Montréal, QC, Canada (email: \{charles.pillet,pascal.giard\}@lacime.etsmtl.ca). This project was partially funded by the NSERC CRD grant \#494694.}% <-this % stops a space
		%\thanks{Digital Object Identifier: 10.1109/LCOMM.2022.3230202.}% <-this % stops a space
}
%\author{Author 1, Author 2, Author 3}
%	\markboth{IEEE Transactions on Communications}%
%	{To be submitted}
	\maketitle
	% As a general rule, do not put math, special symbols or citations
	% in the abstract
	\begin{abstract}
		In this paper, we propose to add dynamic frozen bits to underlying polar codes with a Reed-Muller information set with the aim of maintaining the same sub-decoding structure in Automorphism Ensemble (AE) and lowering the Maximum Likelihood (ML) bound by reducing the number of minimum weight codewords. 
		We provide the dynamic freezing constraint matrix that remains identical after applying a permutation linear transformation. 
		This feature also permits to drastically reduce the memory requirements of an AE decoder with polar-like codes having dynamic frozen bits.
		We show that, under AE decoding, the proposed dynamic freezing constraints lead to a gain of up to $0.25$\,dB compared to the ML bound of the $\mathcal{R}(3,7)$ Reed-Muller code, at the cost of small increase in memory requirements.
\end{abstract}
	\IEEEpeerreviewmaketitle
	%===========================================================================================
	\section{Introduction}
	Polar codes \cite{ArikanPolarCodes} have been introduced as the first provably capacity-achieving linear codes for the class of symmetric binary memoryless channels under the low-complexity Successive-Cancellation (SC) decoding algorithm.
	However, capacity is achieved for unpractical code lengths and polar codes suffer from a poor Maximum Likelihood (ML) bound in the finite-length regime.
	Polar codes concatenated with a cyclic redundancy check (CRC) code, namely CRC-aided (CA) polar codes, drastically enhance  the error-correction performance under SC-List (SCL) \cite{SCL} decoding.
	Indeed, using a CRC code both enhances the distance property and approximates a genie decoder \cite{SCL,crc1,crc2}. % at the end of the SCL decoding\cite{SCL,crc1,crc2}.
	CA-polar codes have been selected among the channel coding schemes in 5G-New Radio \cite{5GHuawei}.
	
	On the other hand, recently introduced
	Polarization-adjusted-convolutional (PAC) codes \cite{ArikanPAC} perform very close to the second-order rate approximation of the binary-input additive white Gaussian noise (BI-AWGN) in short block-length regime \cite{PPV} under SCL decoding with large list sizes, i.e., $L\geq128$. 
In \cite{ICC_row_merging} and \cite{Samet}, the concept of single-argument dynamic frozen functions was introduced, where any dynamic frozen bit is dependent on only one previous information bit. 
Also, it has been shown that properly chosen dynamic frozen bits reduce the number of minimum weight codewords or increase the minimum distance. 
	Code designs with single-argument dynamic frozen bits perform as well as PAC codes with much lower decoding complexity at short block-lengths \cite{ICC_row_merging}.
% 	Parity-check bits have been used to improve the minimum distance \cite{Samet_pretransform} or reduce the number of codewords of minimum weight \cite{ICC_row_merging,pre_transform}. 
	All these modified polar codes fall into the same category, namely, \emph{pre-transformed} polar codes \cite{pre_transform}, and can be defined by their dynamic freezing constraint matrix.
	%link to auto
	%link to reduce the list size
	%link to add dynamic frozen bits in AE
	With properly chosen pre-transformation matrices, pre-transformed polar codes were proven to enhance the distance spectrum of the code without reducing the minimum distance of the underlying code\cite{pre_transform}.
	Given their distance properties, Reed-Muller codes are often chosen as underlying codes of pre-transformed polar codes for short block lengths \cite{ArikanPAC,ICC_row_merging,Samet}.
	
	%Permutation list decoding (PLDEC) recently gains attention as a possible alternative to list decoding for code exhibiting symmetry \cite{ReedMullerpldec,PSMC,PSC}.
	%Initially in PLDEC, $M$ permuted factor graphs are selected to decode in parallel the noisy vector and the most likely candidate codeword is returned as decoding output.
	Automorphism Ensemble (AE) decoding \cite{geiselhart2020automorphism} is a possible alternative to list decoding for codes exhibiting symmetry \cite{ReedMullerpldec,PSMC,PSC} such as Reed-Muller codes. 
	In AE decoding, $M$ permutations are selected in the affine automorphism group of the code to generate $M$ instances of the noisy vector.
	The $M$ permuted noisy vectors are decoded in parallel and the most likely candidate codeword is selected.
	AE decoding was first proposed for Reed-Muller codes \cite{geiselhart2020automorphism} since the affine automorphism group is isomorphic to the general affine group \cite{WilliamsSloane}.
	AE decoding was proposed for polar codes in \cite{geiselhart2021automorphismPC,AE_v1,AE_v2}.
    
	Recently, affine permutations outside of the affine automorphism group of a given polar code were shown to be usable in ensemble decoding \cite{PermDecRussianSubcode}.
	Each of the permutations generates a new pre-transformed polar code.
    In \cite{geiselhart_subcode}, affine permutations are successfully applied to CA-polar codes decoded with AE-Belief Propagation. 
    The error-correction performance is improved with respect to \cite{BPL_crcaided}.
    
    In this work, we integrate the single-argument dynamic frozen function concept into automorphism-based code design. 
    In that sense, we propose a dynamic freezing constraint matrix for Reed-Muller codes that remains identical for a large group of permutations.
    The proposed dynamic freezing constraint matrix permits to lower the ML bound and achieves better error-correction performance under AE-SCL with limited list size $L$ and automorphisms used $M$.
    Moreover, the proposed dynamic freezing constraint matrix drastically reduces the memory required by AE decoding algorithm when pre-transformed polar codes are used.
    %In that sense, we propose a dynamic freezing constraint matrix for \sg{polar codes with Reed-Muller information set, i.e, bit-subchannels are determined according to Hamming weight of the polar matrix rows. This reduces the number of minimum weight codewords as in \cite{ArikanPAC} and \cite{ICC_row_merging}.}
    %As a consequence, the ML bound is lowered and the error-correction performance is improved with no latency loss. 
    %Moreover, the proposed dynamic freezing constraint matrix  \sg{remains} stable for a given set of permutations  reducing drastically the memory required by AE decoding algorithm when pre-transformed polar codes are used.
    
	\section{Preliminaries}

\subsection{Reed-Muller Codes, Polar Codes and Monomial Formalism}
A polar code $(N,K)$ of length $N=2^n$ and dimension $K$ is a binary block code defined by the binary kernel $\mathbf{G}_2=\left[\begin{smallmatrix}1 &0\\ 1 &1\end{smallmatrix}\right]$, the transformation matrix $\mathbf{G}_N=\mathbf{G}_2^{\otimes n}$, where $(\cdot)^{\otimes n}$ denotes the $n^{th}$ Kronecker power, the information set $\mathcal{I}\subseteq\{0,\dots,N-1\}$ and the frozen set $\mathcal{F}=\mathcal{I}^{c}$. 
Encoding is performed as $\mathbf{x}=\mathbf{u}\cdot \mathbf{G}_N$ with the input vector $\mathbf{u}=\left(u_0,\dots,u_{N-1}\right)$ generated by assigning $u_i=0$ if $i\in\mathcal{F}$ and storing information in the $K$ bit-channels stated in $\mathcal{I}$.
In polar coding, $\mathcal{I}$  includes the $K$ most reliable bit-channels resulting from channel polarization \cite{ArikanPolarCodes}. 
A polar-like code may use a different criteria to define $\mathcal{I}$, not necessarily reliability. %\sg{, where $\mathcal{I}$ can be chosen respect to any criteria, not necessarily respect to reliability}.
The $K$ rows of $\mathbf{G}_N$ stated in $\mathcal{I}$ form the generator matrix $\mathbf{G}$ of the polar-like code.

The relation between polar codes and Reed-Muller codes was already pointed out in \cite{ArikanPolarCodes}.
That is, the row-selection of $\mathbf{G}_N$ to generate $\mathbf{G}$ for a $\mathcal{R}(r,n)$ Reed-Muller code \cite{Reed,Muller} is based on the Hamming weight of the rows. 
Hence, Reed-Muller codes are polar-like codes.
The Hamming weight of row $k$ is $2^{i_1(\mathbf{b}_k)}$ \cite{ArikanPolarCodes,ICC_row_merging}, where $\mathbf{b}_k=(b_{k,0},\dots,b_{k,n-1})$ is the binary representation of $k$ and $i_1(\cdot)$ returns the Hamming weight.
The binary representation of indices in $\mathcal{I}_{\mathcal{R}(r,n)}$ verifies 
\begin{align}
    \mathcal{I}_{\mathcal{R}(r,n)} = \left\{k\,:\,i_1(\mathbf{b}_k)\geq n-r\right\}.
\end{align}

%==================================================================================================	
Note that the row $\mathbf{g}_k$ of $\mathbf{G}_N$ can be represented as the evaluation of a monomial in $\mathbb{F}_2[z_0,\ldots, z_{n-1}]$, where the coordinate indexing increases from right to left:
\begin{align}
		\mathbf{g}_k=\left\{\prod_{j\in \mathcal{P}_{0}(\mathbf{b}_k) }\hspace{-0.3cm}z_j: \mathbf{z}\in \mathbb{F}^{n}_2\right\},\label{Eq:g_rows_monmial}
	\end{align}
	i.e.,
	\begin{align*}
		g_{k,m}=\prod_{j\in \mathcal{P}_{0}(\mathbf{b}_k) }\hspace{-0.3cm}z_j, \quad [z_0,\ldots, z_{n-1}]=\mathbf{b}_{N-m-1},
	\end{align*}
where $\mathcal{P}_{0}(\cdot)$ is the set of coordinate indices with $0$ in $\mathbf{b}_k$.

Hence, any row $\mathbf{g}_k$ corresponds to a monomial obtained with respect to binary representation of $k$:
	\begin{align}\label{eq:monomial2}
	\mathsf{f}_k(\mathbf{z})= \prod_{j\in \mathcal{P}_{0}(\mathbf{b}_k)} \hspace{-0.2cm}z_j,
	\end{align}
i.e., the monomial and binary representations are alternative representations of the same entity: polar matrix rows. 
Reed-Muller codes can thus be described through monomial formalism in such that any codeword is obtained as an evaluation of a linear combination of some monomials in the monomial set $\mathcal{M}_{\mathcal{R}(r,n)}$ corresponding to a given information vector, where
	\begin{align}\label{eq:monomial3}
	    \mathcal{M}_{\mathcal{R}(r,n)}=\left\{\mathsf{f}_k(\mathbf{z}): i_1(\mathbf{b}_k)\geq n-r\right\}.
	\end{align}
	
% 	, i.e., rows of polar encoding matrix given that frozen bits are set to zero.

	\subsection{Pre-transformed Polar Codes}
For a given $K$-bit information vector $\mathbf{v}$ and a pre-transformation matrix $\mathbf{W}\in\mathbb{F}_2^{K\times N}$, encoding of a pre-transformed polar code is performed as $\mathbf{x}=\mathbf{u}\mathbf{G}_N$ with  $\mathbf{u}=\mathbf{v}\mathbf{W}$. %\underset{\mathbf{=u}}{\underbrace{\mathbf{v}\mathbf{W}}}
Pre-transformed polar codes are defined by the dynamic freezing constraint matrix $\mathbf{V}\in\mathbb{F}_2^{(N-K)\times N}$ verifying $\mathbf{W}\mathbf{V}^\top=\mathbf{0}$ \cite{polar_subcode}, where $(\cdot)^\top$ denotes the transpose of a matrix.
$\mathbf{V}$ transcribes the bits in $\mathcal{F}$, frozen at $0$ (with a single $1$ in the row) or data-dependent (several $1s$ in one row).
Properly chosen pre-transformation matrices may reduce the number of low-weight codewords, enhancing the distance properties in the finite length regime \cite{pre_transform,Samet}.
Reed-Muller codes are often adopted as underlying codes for pre-transformed polar codes at short block lengths \cite{ArikanPAC,ICC_row_merging,Samet}.
%==================================================================================================	
	\subsection{Automorphism Group of Reed-Muller Codes}
An automorphism $\pi$ of a code $\mathcal{C}$ is a permutation on $\{0,\dots,N-1\} $, mapping every codeword $x \in \mathcal{C}$ into another codeword $\pi(x) \in \mathcal{C}$. 
The automorphism group $\Aut(\mathcal{C})$ of a code $\mathcal{C}$ is the set containing all automorphisms of the code. 
The general affine group of order $n$, $\mathsf{GA}(n)$,  is defined as the set of all transformations of $n$ variables described by  
\begin{align}
	\mathbf{z} & \mapsto \mathbf{z'} = \mathbf{A} \mathbf{z} + \mathbf{b}\; \text{mod}\, 2  ,
	\label{eq:GA}
\end{align}
with $\mathbf{z} , \mathbf{z'} \in \mathbb{F}_2^n$,  $\mathbf{A}\in\mathbb{F}_2^{n\times n}$ an invertible matrix and $\mathbf{b}\in\mathbb{F}_2^{n\times 1}$. 
The \emph{affine automorphism group} $\mathcal{A}(\mathcal{C})\subseteq \Aut(\mathcal{C})$ of any Reed-Muller code is known to be general affine group $\mathsf{GA}(n)$ \cite{WilliamsSloane}.
The affine automorphism group of polar codes contains the lower-triangular-affine group ($\mathsf{LTA}$) of order $n$. 

The two aforementioned groups can be described by the block-lower-triangular affine (BLTA) group \cite{geiselhart2020automorphism}, namely the group of affine transformations having a block-lower-triangular transformation matrix. 
A $\BLTA$ group is recovered by the sizes of the blocks alongside the diagonal, defining the block structure $\mathbf{S}=(s_1,\ldots,s_l)$, $s_1+\ldots+s_l=n$. 
$\mathsf{LTA}$ and $\mathsf{GA}(n)$ correspond to $\BLTA(1,\dots,1)$ and $\BLTA(n)$, respectively.

The permutation linear transformation $\PL$ corresponds to the group in which $\mathbf{A}$ is a permutation matrix and $\mathbf{b}=0$. 
Note that any $\PL$ automorphism (except identity) maps each monomial $\mathsf{f}_k(\mathbf{z})$ in $\mathcal{M}_{\mathcal{R}(r,n)}$ to a different $\mathsf{f}_{k^{'}}(\mathbf{z})$ in $\mathcal{M}_{\mathcal{R}(r,n)}$ with the same degree. 
This leads to the following corollary.
\begin{proposition}\label{prop:prop1}
Let $\pi\in\PL$ be an automorphism with $\mathbf{A}$ as the corresponding permutation matrix. 
Then, $\mathsf{f}_k(\mathbf{A}\mathbf{z})=\mathsf{f}_{k^{'}}(\mathbf{z})$
such that
\begin{align}
    \mathbf{b}_{k^{'}}= \mathbf{A}\mathbf{b}_k. %\text{   and   } i_1( \mathbf{b}_{k^{'}})=i_1( \mathbf{b}_k).
\end{align}
%where $\mathbf{A}$ is the corresponding permutation matrix for $\pi$.
\end{proposition}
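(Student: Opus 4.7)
The plan is to prove the claim by direct calculation from the monomial representation \eqref{eq:monomial2}. Substituting $\mathbf{A}\mathbf{z}$ for $\mathbf{z}$ gives
\[
\mathsf{f}_k(\mathbf{A}\mathbf{z}) \;=\; \prod_{j\in\mathcal{P}_{0}(\mathbf{b}_k)} (\mathbf{A}\mathbf{z})_j.
\]
Since $\pi\in\PL$, the matrix $\mathbf{A}$ is a permutation matrix, so $\mathbf{z}\mapsto\mathbf{A}\mathbf{z}$ merely permutes the $n$ components of $\mathbf{z}$ according to some permutation $\sigma$ of $\{0,\ldots,n-1\}$ associated with $\mathbf{A}$. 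Each factor $(\mathbf{A}\mathbf{z})_j$ therefore reduces to a single variable $z_{\sigma(j)}$, and the full product remains a squarefree monomial of the same degree. By \eqref{eq:monomial2}, this product equals an $\mathsf{f}_{k'}(\mathbf{z})$ whose zero-support satisfies $\mathcal{P}_0(\mathbf{b}_{k'}) = \sigma(\mathcal{P}_0(\mathbf{b}_k))$, which already recovers the remark preceding the proposition that $\pi$ preserves the monomial degree.

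Next, I would translate the support-level identity $\mathcal{P}_0(\mathbf{b}_{k'}) = \sigma(\mathcal{P}_0(\mathbf{b}_k))$ into the matrix form $\mathbf{b}_{k'} = \mathbf{A}\mathbf{b}_k$. Writing $\bar{\mathbf{b}} = \mathbf{1} - \mathbf{b}$ for the componentwise complement in $\mathbb{F}_2^n$, the $\{0,1\}$-indicator of $\mathcal{P}_0(\mathbf{b}_k)$ is exactly $\bar{\mathbf{b}}_k$, and the above set equality is equivalent to $\bar{\mathbf{b}}_{k'} = \mathbf{A}\bar{\mathbf{b}}_k$, because applying the permutation matrix $\mathbf{A}$ to an indicator vector realises the same permutation $\sigma$ on $\{0,\ldots,n-1\}$ used in the monomial substitution. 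Combining this with the identity $\mathbf{A}\mathbf{1} = \mathbf{1}$, which holds for any permutation matrix, and taking componentwise complements then yields the desired $\mathbf{b}_{k'} = \mathbf{A}\mathbf{b}_k$.

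The main obstacle is not the algebra itself but the bookkeeping: one has to distinguish carefully between $\sigma$ as a permutation of the index set $\{0,\ldots,n-1\}$ labelling the monomial variables, and $\mathbf{A}$ as a matrix acting on vectors in $\mathbb{F}_2^n$. The two standard conventions for associating a permutation matrix to $\sigma$ (namely $A_{i,\sigma(i)} = 1$ versus $A_{\sigma(i),i} = 1$) differ by a transpose, so one must fix the convention implicit in \eqref{eq:GA} once and for all, and if in doubt verify it on a small case such as $n=3$, to ensure that the matrix appearing on the right-hand side of the claim is $\mathbf{A}$ itself rather than $\mathbf{A}^{\top}$. Once this bookkeeping is settled, everything else reduces to reading off \eqref{eq:monomial2} and using that $\mathbf{A}$ preserves the all-ones vector.
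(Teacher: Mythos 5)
Your proof is correct and follows essentially the same route as the paper's: substitute $\mathbf{A}\mathbf{z}$ into the monomial of \eqref{eq:monomial2}, observe that a permutation matrix merely relabels the variables, and transfer that relabelling to the binary index of the row, which is exactly the chain (a)--(c) in the paper's own proof. The $\mathbf{A}$-versus-$\mathbf{A}^{\top}$ bookkeeping you flag is a real subtlety (with your convention $(\mathbf{A}\mathbf{z})_j=z_{\sigma(j)}$ the indicator-vector step naturally produces $\mathbf{A}^{\top}\mathbf{b}_k$), but the paper's step (b)--(c) makes the same silent identification, and the distinction is immaterial for how the proposition is used later, since $\PL$ is closed under transposition and only the preservation of Hamming weight and complements is exploited.
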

\begin{proof}
Application of $\pi$ to row $\mathbf{g}_k$ is given as
\begin{align*}
    \pi(\mathbf{g}_k)&\overset{(a)}{=}\pi(\{\mathsf{f}_k(\mathbf{z}), \, \mathbf{z}\in \mathbb{F}_{2}^{n}\}),\\
    & =\{\mathsf{f}_k(\mathbf{A}\mathbf{z}), \, \mathbf{z}\in \mathbb{F}_{2}^{n}\},\\
    & \overset{(b)}{=} \left\{\prod_{j\in \mathcal{P}_{0}(\mathbf{b}_k)} \hspace{-0.2cm}z_j^{'}, \, \mathbf{z}\in \mathbb{F}_{2}^{n}\right\},\\
    & \overset{(c)}{=} \left\{\prod_{j\in \mathcal{P}_{0}(\mathbf{A}\mathbf{b}_k)} \hspace{-0.2cm}z_j, \, \mathbf{z}\in \mathbb{F}_{2}^{n}\right\},\\
     & = \left\{\prod_{j\in \mathcal{P}_{0}(\mathbf{b}_{k^{'}})} \hspace{-0.2cm}z_j, \, \mathbf{z}\in \mathbb{F}_{2}^{n}\right\},\\
     & = \{\mathsf{f}_{k^{'}}(\mathbf{z}), \, \mathbf{z}\in \mathbb{F}_{2}^{n}\},    
\end{align*}
where $z_j^{'}$ is obtained from  vector multiplication of the $j^{th}$ row of $\mathbf{A}$ with ${[ z_0,\ldots, z_{n-1}]}^\top$, (a) comes from \eqref{Eq:g_rows_monmial} and~\eqref{eq:monomial2} while (b) and (c) are induced by the permutation matrix $\mathbf{A}$.
\end{proof}
\subsection{Automorphism Ensemble Decoder of Reed-Muller Codes}
Given a decoder $\dec$ for a code $\mathcal{C}$, the corresponding automorphism decoder $\adec$ is given by
\begin{equation}
\label{eq:adec}
\adec(y,\pi) = \pi^{-1}\left( \dec(\pi(y)) \right),
\end{equation}
where $y$ is the received signal and $\pi \in \mathcal{A}(\mathcal{C})$. 
An \emph{automorphism ensemble} (AE) decoder \cite{geiselhart2020automorphism} consists of $M$ $\adec$ instances running in parallel with $\pi\in\mathcal{A}(\mathcal{C})$.
Each of the $M$ $\adec$ returns a codeword candidate and the least-squares metric is used to select the most likely candidate. 

In \cite{PermDecRussianSubcode}, permutations outside of $\mathcal{A}(\mathcal{C})$ are used and proved to result in a pre-transformed polar code.
For a pre-transformed polar code defined by $\mathbf{V}$, applying a permutation $\pi\in\mathsf{GA}(n)$ results in a pre-transformed polar code defined by the transformed dynamic freezing constraint matrix \cite{geiselhart_subcode}
\begin{align}
\mathbf{V}_T=\mathbf{V}\left(\mathbf{G}_N\mathbf{T}^{-1}\mathbf{G}_N\right)^\top,\label{eq:VT}
\end{align}
with $\mathbf{T}^{-1}\in\mathbb{F}_2^{N\times N}$, the post-transformation matrix of $\pi$, defined as the permutation matrix related to $\pi$.

According to \eqref{eq:VT}, the transformed dynamic freezing constraint matrix $\mathbf{V}_T$ is permutation dependent. 
Hence, if pre-transformed polar codes are used in AE decoder, every $\adec$ decoder needs to store its related transformed dynamic freezing constraint matrix leading to large memory requirements.
	
%=====================================================================================================
\section{Stable dynamic freezing constraint matrix}\label{sec:stable_freezing}
In this section, given that $\mathbf{T}^{-1}$ is a post-transformation matrix of a permutation in $\BLTA(n-1,1)\cap\PL$, a dynamic freezing constraint matrix $\mathbf{V}$ is proposed such that $\mathbf{V}_T$ returns an equivalent dynamic freezing constraint.
Next, $\mathcal{R}(r,n)$ Reed-Muller codes are used having $\mathcal{A}(\mathcal{R}(r,n))=\mathsf{BLTA}(n)$.

\subsection{Structure of $\mathbf{V}$}
First, the definition of two equivalent dynamic freezing constraint matrices is given.
\begin{definition}[Equivalent dynamic freezing constraint matrices]\label{eq:equivalence}
Given a pre-transformation matrix $\mathbf{W}$, two dynamic freezing constraint matrices $\mathbf{V}_1$ and $\mathbf{V}_2$ are equivalent if $\mathbf{WV}_1^\top=\mathbf{WV}_2^\top=\mathbf{0}$. 
\end{definition}
After row reduction of $\mathbf{V}_1$ and $\mathbf{V}_2$, the same dynamic freezing constraints are obtained. It follows that a given pre-transformed polar code can be defined in various ways. 
Next, the stability of $\mathbf{V}$ under a set of permutations is defined.
\begin{definition}[Stability of $\mathbf{V}$]
A dynamic freezing constraint matrix $\mathbf{V}$ is said to be \emph{stable} under the set of permutations $\mathcal{B}$ if and only if for any $\pi\in\mathcal{B}$, the transformed dynamic freezing constraint matrix $\mathbf{V}_T$ is equivalent to $\mathbf{V}$.
\end{definition}
First, the input vector $\mathbf{u}$ stores the information in the $K$ positions stated in $\mathcal{I}$.
The proposed freezing constraint is:
\begin{align}
    	\forall i\in\mathcal{F}\begin{cases} u_i=0\text{,} & \text{if } b_{i,n-1}=0 \iff i<N/2\text{,}\\
    u_i=u_j\text{,} & \text{if } b_{i,n-1}=1 \, \text{with } \mathbf{b}_j=\mathbf{\overline{b}}_i,
	\end{cases}\label{eq:constraint}
\end{align}
where $\bar{(\cdot)}$ denotes the complement operation, i.e., for any $0\leq \ell<n$, $b_{j,\ell}=1+b_{i,\ell} \text{ mod}\, 2$.

The second line in \eqref{eq:constraint} can be sub-divided as either $j\in\mathcal{F}$ or $j\in\mathcal{I}$. In the former case, we have $u_i=u_j=0$, in the latter case, the bit $i$ is a dynamic frozen bit with $u_i=u_j\in\{0,1\}$.
Hence, in most cases, the matrix $\mathbf{V}$, that transcribes the bits in $\mathcal{F}$ as induced by the constraint \eqref{eq:constraint}, can be row reduced.
The maximum number of dynamic frozen bits $D$ is given by
{\small
\begin{align}\label{eq:D}
    D=\min\left(\left|\{i\in\mathcal{F}|b_{i,n-1}=1\}\right|,\left|\{j\in\mathcal{I}|b_{j,n-1}=0\}\right|\right).
    %D=\min\left(\left|\{i\in\mathcal{F}|i\geq N/2\}\right|,\left|\{j\in\mathcal{I}|j<N/2\}\right|\right).
\end{align}
}%

Next, we prove that the dynamic freezing constraint matrix $\mathbf{V}$ induced by \eqref{eq:constraint}, compliant with the concept of single-argument dynamic frozen functions \cite{ICC_row_merging}, is stable after applying any permutation $\pi\in\BLTA(n-1,1)\cap\PL$.
\begin{theorem}[Stability over $\BLTA(n-1,1)\cap\PL$]\label{theo:constraint}
The dynamic freezing constraint matrix $\mathbf{V}$ induced by \eqref{eq:constraint} is stable under any permutation $\pi\in\BLTA(n-1,1)\cap\PL$.
\end{theorem}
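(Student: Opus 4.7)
The plan is to unpack what $\pi\in\BLTA(n-1,1)\cap\PL$ means, apply Proposition~\ref{prop:prop1} to reduce \eqref{eq:VT} to a column permutation, and then check that the set of row supports of $\mathbf{V}$ is invariant under that column permutation. First, I would observe that $\pi\in\PL$ forces the translation in \eqref{eq:GA} to satisfy $\mathbf{b}=\mathbf{0}$ and $\mathbf{A}$ to be a permutation matrix, while $\pi\in\BLTA(n-1,1)$ additionally forces $\mathbf{A}$ to be block-lower-triangular with blocks of size $n-1$ and $1$; a permutation matrix with this block structure must permute the first $n-1$ coordinates and fix coordinate $n-1$. Writing $\sigma\colon k\mapsto k'$ for the induced index map with $\mathbf{b}_{k'}=\mathbf{A}\mathbf{b}_k$, I would then record three invariances: (i) Hamming weight is preserved, so $\sigma$ restricts to bijections on $\mathcal{I}_{\mathcal{R}(r,n)}$ and on $\mathcal{F}$; (ii) the last coordinate is fixed, so the value of $b_{k,n-1}$ is preserved; (iii) since $\mathbf{A}\mathbf{1}=\mathbf{1}$, bit-complementation commutes with $\mathbf{A}$, so $\mathbf{b}_j=\bar{\mathbf{b}}_i$ implies $\mathbf{b}_{\sigma(j)}=\bar{\mathbf{b}}_{\sigma(i)}$.

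Next, I would exploit the fact that $\pi\in\PL$ merely permutes the rows of $\mathbf{G}_N$: Proposition~\ref{prop:prop1} gives $\pi(\mathbf{g}_k)=\mathbf{g}_{\sigma(k)}$, which at the matrix level reads $\mathbf{G}_N\mathbf{T}^{-1}=\mathbf{P}\mathbf{G}_N$ for the permutation matrix $\mathbf{P}$ representing $\sigma$. Because $\mathbf{G}_N^{2}=\mathbf{I}$ over $\mathbb{F}_2$, the post-transformation factor in \eqref{eq:VT} collapses to $\mathbf{G}_N\mathbf{T}^{-1}\mathbf{G}_N=\mathbf{P}$, so $\mathbf{V}_T=\mathbf{V}\mathbf{P}^\top$ is simply $\mathbf{V}$ with its columns permuted by $\sigma$.

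Finally, I would close the argument by reading off the row supports of $\mathbf{V}$ directly from \eqref{eq:constraint}: each row of $\mathbf{V}$ has either a singleton support $\{i\}$ with $i\in\mathcal{F}$ and $b_{i,n-1}=0$, or a pair support $\{i,j\}$ with $i\in\mathcal{F}$, $b_{i,n-1}=1$, and $\mathbf{b}_j=\bar{\mathbf{b}}_i$; since all nonzero entries are $1$, each row is determined by its support. The column permutation $\sigma$ sends such a row to one with support $\{\sigma(i)\}$ or $\{\sigma(i),\sigma(j)\}$, and invariances (i)--(iii) guarantee that these images again satisfy the defining predicates of \eqref{eq:constraint}. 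Because $\sigma$ is a bijection, the set of row supports of $\mathbf{V}_T$ coincides with that of $\mathbf{V}$, so $\mathbf{V}_T$ is a row permutation of $\mathbf{V}$; in particular $\mathbf{W}\mathbf{V}_T^\top=\mathbf{W}\mathbf{V}^\top=\mathbf{0}$, which is the equivalence demanded by Definition~\ref{eq:equivalence}.

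The principal obstacle I anticipate is the predicate bookkeeping in the last step: one must verify that $\sigma$ maps each of the two row-shapes of $\mathbf{V}$ to the same shape, never turning a singleton into a pair or vice versa. This is precisely where invariance (ii), the fixing of $b_{k,n-1}$, is essential; without it, $\BLTA(n-1,1)\cap\PL$ could not be enlarged while preserving stability.
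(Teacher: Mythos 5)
Your proof is correct and follows essentially the same route as the paper's: both arguments rest on the facts that a permutation in $\BLTA(n-1,1)\cap\PL$ acts on the binary index representations through $\mathbf{A}$, fixes the last bit, preserves Hamming weight (hence $\mathcal{I}$ and $\mathcal{F}$), and commutes with complementation, so the singleton and pair constraints of \eqref{eq:constraint} are mapped to constraints of the same form. Your extra step of collapsing \eqref{eq:VT} to $\mathbf{V}_T=\mathbf{V}\mathbf{P}^\top$ via Proposition~\ref{prop:prop1} and $\mathbf{G}_N^{2}=\mathbf{I}$ merely makes explicit the link to Definition~\ref{eq:equivalence} that the paper leaves implicit; the underlying argument is the same.
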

\begin{proof}
The permutation $\pi$ is generated with the permutation matrix $\mathbf{A}\in\mathbb{F}_2^{n\times n}$. 
Let us denote by $f$, the vector such that $\mathbf{A}_{k,f(k)}=1$ for $1\leq k \leq n$, we have $f(n)=n$.

By Proposition~\ref{prop:prop1}, application of $\PL$ automorphisms to monomials can also be represented by matrix multiplication of $\mathbf{A}$ with binary representation of the corresponding indices. 
Hence, in the following, the proof is given in terms of binary representations of row indices.

For any $i\in\mathcal{F}$ with $b_{i,n-1}=1$, the bit value is data dependent on index $j\in\mathcal{I}$ with $\mathbf{b}_j=\overline{\mathbf{b}}_i$.
The indices $\mathbf{b}_i$ and $\mathbf{b}_j$ are mapped to $\mathbf{b}_{i'}=\mathbf{A}*\mathbf{b}_i$ and $\mathbf{b}_{j'}=\mathbf{A}*\mathbf{b}_j$ with
\begin{align}
    %b_{i'}=\left[ \begin{smallmatrix} b_{i,f(1)}\\ b_{i,f(1)} \\\dots \\ b_{i,f(n)=n}\end{smallmatrix}\right] \hspace{0.05\columnwidth} b_{j'}=\left[ \begin{smallmatrix} b_{j,f(1)}\\ b_{j,f(1)} \\\dots \\ b_{j,f(n)=n}\end{smallmatrix}\right]=\overline{b_{i'}}
    \mathbf{b}_{i'}=\begin{bmatrix} b_{i,f(1)}\\ b_{i,f(2)} \\\dots \\ b_{i,f(n)=n}\end{bmatrix},\hspace{0.05\columnwidth} \mathbf{b}_{j'}=\begin{bmatrix} b_{j,f(1)}\\ b_{j,f(2)} \\\dots \\ b_{j,f(n)=n}\end{bmatrix}=\overline{\mathbf{b}}_{i'}.
\end{align}
% \fixme{Since a $\PL$ transformation leads to another code representation and \sg{$\mathbf{b}_{j'}=\overline{\mathbf{b}}_{i'}$}, the dynamic freezing constraint remains unchanged.}{Only link to monomial I could think of.}
% $\mathcal{I}$ is  information set of Reed-Muller code and
Note that, since $\pi \in \PL\subset\mathcal{A}(\mathcal{R}(r,n))$, the information set $\mathcal{I}$ is preserved, i.e., $\pi(\mathcal{I})=\mathcal{I}$, and for any $(i,j)$ pair there is a corresponding $(i',j')$ pair such that $i_1(\mathbf{b}_i)=i_1(\mathbf{b}_i')$ and $i_1(\mathbf{b}_j)=i_1(\mathbf{b}_j')$. Hence, by construction of dynamic frozen set and functions, by \eqref{eq:constraint}, the dynamic frozen set and corresponding functions are preserved as well.
\end{proof}
The equivalence between $\mathbf{V}$ and $\mathbf{V}_T$ for a single permutation $\pi$ can also be verified with \eqref{eq:VT}.
As an example, the $\mathcal{R}(1,3)$ code, having frozen set $\mathcal{F}=\{0,1,2,4\}$, is constrained according to \eqref{eq:constraint}.
The matrices $\mathbf{V}$ and $\mathbf{W}$ are:
\begin{align}\label{eq:matrixV}
  	\mathbf{V}=\left[ \begin{smallmatrix}
  		1 & 0 & 0 & 0 & 0 & 0 & 0 & 0 \\
  		0 & 1 & 0 & 0 & 0 & 0 & 0 & 0 \\
  		0 & 0 & 1 & 0 & 0 & 0 & 0 & 0 \\
  		0 & 0 & 0 & 1 & 1 & 0 & 0 & 0
  	\end{smallmatrix} \right], \,\,\,\,\,\,
  	\mathbf{W}=\left[ \begin{smallmatrix}
  		0 & 0 & 0 & 1 & 1 & 0 & 0 & 0 \\
  		0 & 0 & 0 & 0 & 0 & 1 & 0 & 0 \\
  		0 & 0 & 0 & 0 & 0 & 0 & 1 & 0 \\
  		0 & 0 & 0 & 0 & 0 & 0 & 0 & 1
  	\end{smallmatrix} \right].
  \end{align}
The code is permuted with $\pi_1=(0,2,1,3,4,6,5,7)$ retrieved as $\pi_1(i)=\mathrm{bi2int}\left(\left[ \begin{smallmatrix} 0 & 1 & 0\\ 1 & 0 & 0\\ 0 & 0 & 1 \end{smallmatrix} \right]*\mathbf{b}_i\right),$ with $0\leq i<8$ and $\mathrm{bi2int}(\mathbf{b}_x)=\sum_{t=0} 2^t*b_{x,t}$, a function that converts a binary number to an integer.
Its transformed dynamic freezing constraint matrix $\mathbf{V}_{T_1}$ is shown in \eqref{eq:matrixVT_1_2}.
$\mathbf{V}_{T_1}$ and $\mathbf{V}$ are equivalent, as expected since $\pi_1\in\mathsf{BLTA}(2,1)\cap \mathsf{PL}$.
  
  Some transformed dynamic freezing constraint matrices $\mathbf{V}_T$ generated after applying permutation $\pi\notin \mathsf{BLTA}(2,1)\cap \mathsf{PL}$ are equivalent to $\mathbf{V}$ such as $\pi_2=(7,2,5,0,3,6,1,4)\in \mathsf{LTA}$. 
  $\pi_2$ is retrieved as $\pi_2(i)=\mathrm{bi2int}\left(\left[ \begin{smallmatrix} 1 & 0 & 0\\ 0 & 1 & 0\\ 1 & 0 & 1 \end{smallmatrix} \right]*\mathbf{b}_i+\left[ \begin{smallmatrix} 1\\ 1\\ 1 \end{smallmatrix} \right]\right)$.
  The transformed dynamic freezing constraint matrix $\mathbf{V}_{T_2}$ is:
  \begin{align}\label{eq:matrixVT_1_2}
  	\mathbf{V}_{T_1}=\left[ \begin{smallmatrix}
  		1 & 0 & 0 & 0 & 0 & 0 & 0 & 0 \\
  		0 & 0 & 1 & 0 & 0 & 0 & 0 & 0 \\
  		0 & 1 & 0 & 0 & 0 & 0 & 0 & 0 \\
  		0 & 0 & 0 & 1 & 1 & 0 & 0 & 0
  	\end{smallmatrix} \right], \,\,\,\,\,\,
  	\mathbf{V}_{T_2}&=
  	\left[ \begin{smallmatrix}
  	    1 & 0 & 0 & 0 & 0 & 0 & 0 & 0 \\
  		1 & 1 & 0 & 0 & 0 & 0 & 0 & 0 \\
  		1 & 0 & 1 & 0 & 0 & 0 & 0 & 0 \\
  		1 & 0 & 1 & 1 & 1 & 0 & 0 & 0
  	\end{smallmatrix} \right].
  \end{align}
  After row reduction on $\mathbf{V}_{T_2}$, $\mathbf{V}$ and $\mathbf{V}_{T_2}$ have the same dynamic freezing constraint, i.e., $\mathbf{V}$ and $\mathbf{V}_{T_2}$ are equivalent.
  We also note that
  \begin{align}
      \mathbf{WV}^\top=\mathbf{W}\mathbf{V}_{T_1}^\top=\mathbf{W}\mathbf{V}_{T_2}^\top=\mathbf{0},
  \end{align} satisfying Definition~\ref{eq:equivalence}.
  However, we conjecture that the transformed dynamic freezing constraint is not equivalent with respect to $\mathbf{V}$ for longer codes and $\pi\notin\mathsf{BLTA}(n-1,1)\cap \mathsf{PL}$.

Next, we study dynamic freezing constraint matrices that are variants of $\mathbf{V}$ as described in \eqref{eq:constraint}.
Let $\mathcal{F}_{d_i}=\{k\in\mathcal{F}:b_{k,n-1}=1, \, i_1(\mathbf{b}_k)=i\}$ and $\mathcal{I}_{d_i}=\{k\in\mathcal{I}:b_{k,n-1}=0, \, i_1(\mathbf{b}_k)=i\}$. 
By \eqref{eq:constraint}, $\mathbf{V}$ is obtained by setting  $\mathbf{u}_{\mathcal{F}_{d_i}}=\mathbf{u}_{\mathcal{I}_{d_{n-i}}}$, where applicable. 
Then, any variant of $\mathbf{V}$ is obtained by setting $\mathbf{u}_{\mathcal{F}_{d_i}}=\mathbf{{0}}$ for some $i\in \{1,\dots,n\}$.
%\stkout{By freezing to $0$ all frozen indices $i\in\mathcal{F}$ with the same binary weight, the resulting dynamic freezing constraint matrix is also stable.}
\begin{proposition}
    The alternative dynamic freezing constraint is also stable under $\BLTA(n-1,1)\cap\PL$.
\end{proposition}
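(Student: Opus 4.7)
The plan is to reuse the machinery of the proof of Theorem~\ref{theo:constraint} and observe that the only new ingredient needed is that setting $\mathbf{u}_{\mathcal{F}_{d_i}}=\mathbf{0}$ on a whole ``degree layer'' is itself a permutation-invariant constraint when $\pi\in\BLTA(n-1,1)\cap\PL$. Concretely, I would fix such a $\pi$ with associated permutation matrix $\mathbf{A}$ satisfying $f(n)=n$, as in the proof of Theorem~\ref{theo:constraint}, and then partition the rows of the variant matrix $\mathbf{V}$ into two kinds: (i) rows that encode a dynamic equality $u_k=u_j$ with $k\in\mathcal{F}_{d_i}$ and $\mathbf{b}_j=\overline{\mathbf{b}}_k$, and (ii) rows that encode a static constraint $u_k=0$ for $k\in\mathcal{F}_{d_i}$ (for those degrees $i$ where the alternative choice was made).

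For the rows of type (i), Theorem~\ref{theo:constraint} already guarantees stability, so nothing new is required. For the rows of type (ii), I would verify two facts. First, because $\pi\in\PL$ is a permutation of coordinates, it preserves the Hamming weight of the binary representation, hence $\pi(\mathcal{F}_{d_i})=\mathcal{F}_{d_i}$ for every $i$. Second, because $\pi\in\BLTA(n-1,1)$ implies $f(n)=n$, the condition $b_{k,n-1}=1$ is preserved by $\pi$ (the last coordinate is fixed by $\mathbf{A}$), so the subset $\mathcal{F}_{d_i}\subseteq\mathcal{F}$ is invariant not just as an unordered set but compatibly with the splitting in~\eqref{eq:constraint}. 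Consequently, the static constraint $\mathbf{u}_{\mathcal{F}_{d_i}}=\mathbf{0}$ is mapped, row-by-row after~\eqref{eq:VT}, into a set of static constraints supported on exactly the same index set, hence equivalent in the sense of Definition~\ref{eq:equivalence}.

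To close the argument I would then verify the equivalence formally in the matrix language: denoting by $\mathbf{V}^{(a)}$ and $\mathbf{V}^{(b)}$ the sub-matrices collecting rows of types (i) and (ii), $\mathbf{V}_T$ splits as $\mathbf{V}^{(a)}_T$ stacked with $\mathbf{V}^{(b)}_T$; Theorem~\ref{theo:constraint} yields $\mathbf{W}(\mathbf{V}^{(a)}_T)^\top=\mathbf{0}$, while the invariance of each $\mathcal{F}_{d_i}$ under $\pi$ gives that $\mathbf{V}^{(b)}_T$ is, after row reduction, nothing but a re-ordering of the canonical unit-weight rows appearing in $\mathbf{V}^{(b)}$, so $\mathbf{W}(\mathbf{V}^{(b)}_T)^\top=\mathbf{0}$ as well. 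Hence $\mathbf{V}_T$ and $\mathbf{V}$ are equivalent, which is the definition of stability.

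The only step I expect to require slight care is the formal row-reduction argument for the type-(ii) block, because after \eqref{eq:VT} the matrix $\mathbf{V}^{(b)}_T$ is not a priori in canonical form; however, since its rows, seen as frozen-function constraints, are supported on the permutation-invariant set $\mathcal{F}_{d_i}$ and together span the same $|\mathcal{F}_{d_i}|$-dimensional space of ``freeze to zero'' constraints, the equivalence follows by a dimension count. No new combinatorial identity beyond what is used in the proof of Theorem~\ref{theo:constraint} is needed.
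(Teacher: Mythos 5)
Your proposal is correct and follows essentially the same route as the paper: the paper's proof likewise reduces to Theorem~\ref{theo:constraint} and notes that the permutation matrix $\mathbf{A}$ maps any binary representation to one of the same Hamming weight (with the last coordinate fixed), so freezing an entire layer $\mathcal{F}_{d_i}$ to zero remains a permutation-invariant constraint. Your split into dynamic-equality rows and freeze-to-zero rows, together with the observation that $f(n)=n$ preserves the condition $b_{k,n-1}=1$, only makes explicit what the paper's shorter argument leaves implicit.
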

\begin{proof}
The proof of Theorem~\ref{theo:constraint} corresponds to the case having $D$ dynamic frozen bits.
Freezing to $0$ all the indices whose binary representations share the same Hamming weight does not violate the validity of Theorem~\ref{theo:constraint} since the permutation matrix $\mathbf{A}$ maps any binary number to another one with the same Hamming weight. %leads to mapping any polar matrix row to another one with the same Hamming weight, i.e.
\end{proof}

%\stkout{For instance, we define two sets $\mathcal{F}_{d_i}=\{k\in\mathcal{F}:b_{k,n-1}=1, \, i_1(\mathbf{b}_k)=i\}$ and $\mathcal{I}_{d_i}=\{k\in\mathcal{I}:b_{k,n-1}=0, \, i_1(\mathbf{b}_k)=i\}$.}
%\sg{In words, freezing to $0$ all frozen indices $i\in\mathcal{F}$ with the same binary weight results a dynamic freezing constraint matrix that is stable as well.}
\noindent Consider the $\mathcal{R}(3,7)$ Reed-Muller code. 
By \eqref{eq:constraint}, each dynamic frozen bit of the sets $\mathcal{F}_{d_1},\mathcal{F}_{d_2}$ and $\mathcal{F}_{d_3}$ is associated with an information bit of the sets $\mathcal{I}_{d_6},\mathcal{I}_{d_5}$ and $\mathcal{I}_{d_4}$, respectively.
Then, designing a pre-transformation matrix with respect to any combination of dynamic frozen constraints $\mathcal{F}_{d_1}$, $\mathcal{F}_{d_2}$ and $\mathcal{F}_{d_3}$ satisfies the $\PL$ automorphism imposed by Theorem~\ref{theo:constraint}.

Proposition~\ref{prop:total_combi} gives the total number of stable dynamic freezing constraint matrices under $\BLTA(n-1,1)\cap\PL$.
\begin{proposition}\label{prop:total_combi}
    The $\mathcal{R}(r,n)$ Reed-Muller code has $\min\left(2^{n-r-1},2^{r}\right)$ stable dynamic freezing constraint matrices under $\BLTA(n-1,1)\cap\PL$.
\end{proposition}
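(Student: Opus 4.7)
The plan is to reduce the counting to independent binary choices indexed by Hamming-weight classes. By the preceding proposition, any matrix obtained from \eqref{eq:constraint} by choosing, for each Hamming weight $i$, either to keep the dynamic pairing $\mathbf{u}_{\mathcal{F}_{d_i}}=\mathbf{u}_{\mathcal{I}_{d_{n-i}}}$ or to freeze $\mathbf{u}_{\mathcal{F}_{d_i}}=\mathbf{0}$ is stable under $\BLTA(n-1,1)\cap\PL$. So the count equals $2^{c}$, where $c$ is the number of weight classes $i$ for which $\mathcal{F}_{d_i}$ actually carries a non-trivial dynamic constraint rather than a trivial zero-freeze.

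The key step will be pinning down $c$ by characterizing the admissible values of $w=i_1(\mathbf{b}_i)$ for indices $i\in\mathcal{F}$ with $b_{i,n-1}=1$ whose partner $j$ (with $\mathbf{b}_j=\overline{\mathbf{b}}_i$) lies in $\mathcal{I}$. First, I would note that $i\in\mathcal{F}$ forces $w\leq n-r-1$ while $b_{i,n-1}=1$ forces $w\geq 1$. Second, $\mathbf{b}_j=\overline{\mathbf{b}}_i$ gives $i_1(\mathbf{b}_j)=n-w$, and the requirement $j\in\mathcal{I}$ becomes $n-w\geq n-r$, i.e., $w\leq r$. Combining, the admissible weights are exactly $w\in\{1,\dots,\min(n-r-1,r)\}$, yielding $c=\min(n-r-1,r)$.

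Finally, I would argue distinctness: since the sets $\mathcal{F}_{d_i}$ are pairwise disjoint across $i$, toggling a choice in one weight class changes the non-zero entries in rows indexed by that class without affecting the others, so distinct subsets of toggled classes give inequivalent row-reduced constraint matrices in the sense of Definition~1. The conclusion then follows from $2^{\min(n-r-1,r)}=\min(2^{n-r-1},2^r)$.

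The main obstacle I anticipate is the upper bound $w\leq r$: one must recognize that although \eqref{eq:constraint} always pairs an $i$ with $b_{i,n-1}=1$ to some $j$, the pair is a genuine dynamic constraint only when $j\in\mathcal{I}$; otherwise the ``constraint'' collapses to $u_i=0$ and does not provide a separate matrix. Once this is isolated, the counting is immediate.
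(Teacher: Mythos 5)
Your proposal is correct and follows essentially the same route as the paper: both count per-Hamming-weight-class binary choices (keep the pairing $\mathbf{u}_{\mathcal{F}_{d_i}}=\mathbf{u}_{\mathcal{I}_{d_{n-i}}}$ or freeze to zero), identify $\min(n-r-1,r)$ classes for which both sets are non-empty, and conclude $2^{\min(n-r-1,r)}=\min(2^{n-r-1},2^{r})$. Your derivation of the admissible weight range via $\mathbf{b}_j=\overline{\mathbf{b}}_i$ and your explicit distinctness remark are just slightly more detailed versions of what the paper states when it counts the non-empty sets $\mathcal{F}_{d_i}$ and $\mathcal{I}_{d_i}$.
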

\begin{proof}
% \sg{The dynamic frozen set can be partitioned as 
% \begin{align*}
%     \mathcal{F}_{d_}
% \end{align*}}
The number of stable dynamic freezing constraint matrices corresponds to the number of existing combinations of dynamic freezing constraints $\mathcal{F}_{d_i}$ with $\mathcal{I}_{d_{n-i}}$.
The size of sets $\mathcal{F}_{d_i}$ and $\mathcal{I}_{d_{n-i}}$ is $r$-dependent.
For a $\mathcal{R}(r,n)$ Reed-Muller code, the number of non-empty sets $\mathcal{F}_{d_i}$ and $\mathcal{I}_{d_i}$ is respectively $n-r-1$ and $r$.
Hence, the number of stable dynamic freezing constraint matrices corresponds to the number of $k$-combinations for all $0\leq k \leq \min(n-r-1,r)$, i.e., $\min\left(2^{n-r-1},2^{r}\right)$.
The dynamic freezing constraint matrix having zero dynamic frozen bit is included in the formula.
\end{proof}
For the $\mathcal{R}(3,7)$ Reed-Muller code, there are $\min\left(2^{3},2^{3}\right)=8$ stable dynamic freezing constraint matrices.
Each  combination returns a different number of low-weight codewords affecting the ML bound of the resulting pre-transformed code.
However, the best combination is not in the scope of this paper.

\subsection{Memory Requirements}
\begin{table}[t]%b]
	\centering
	\caption{Memory required by AE under three scenarios} 
	\label{tab:mem_code}
	{\begin{tabular}{|c|c|c|c|c|c|}
		\cline{1-6}
		$\mathbf{V}_T$  & Decoder & \multicolumn{4}{c|}{$\mathcal{R}(r,n)$}\\
		\cline{3-6}		
		generated & knowledge & (3,7) & (3,8) & (4,8) & (5,8)\\
		\hline
		\hline
		%\multirow{ 2}{*}{Stable} & \multirow{ 2}{*}{Irrelevant} &\multirow{ 2}{*}{22}&\multirow{ 2}{*}{29}&\multirow{ 2}{*}{29}&\multirow{ 2}{*}{8}\\
        Stable & Irrelevant &22&29&29&8\\
		\hline
		Not& \multicolumn{1}{r|}{Known $\pi$} & 1889 & 5724 & 4199 & 1140\\
		\cline{2-6}
		stable&\multicolumn{1}{r|}{Unknown $\pi$}&\cellcolor{gray!30}65536 & \cellcolor{gray!30}333824 & \cellcolor{gray!30}190464 & \cellcolor{gray!30}75776\\
		\hline
		\multicolumn{2}{c}{} & \multicolumn{4}{|c|}{Additional memory required (in bits)}\\
		\cline{3-6}
	\end{tabular}}
\end{table}
In the following, the additional memory requirements to perform AE decoding with pre-transformed polar codes are evaluated.
The freezing constraint evaluated is that of \eqref{eq:constraint} with $M=8$ permutations.
Any of the proposed stable dynamic freezing constraint matrix can be synthesized with a maximum of $D$ bits (see \eqref{eq:D}). %, shared by all $\adec$ instances.
For any code rate, $D$ has $\frac{N}{2}$ as the upper bound.
In the following, the dynamic and fixed decoder scenarios are studied.  % where a stable dynamic freezing constraint matrix $\mathbf{V}$ is used.
In the dynamic case, the permutations are unknown. 
Each $\adec$ needs to store their corresponding transformed dynamic freezing constraint matrix $\mathbf{V}_T$ since $\mathbf{V}_T$ is permutation dependent \eqref{eq:VT}.
In the fixed scenario, the $M$ permutations are known \emph{a priori} and a row reduction of each $\mathbf{V}_T$ is performed, reducing the amount of memory required.

Table~\ref{tab:mem_code} shows the memory required, in bits, to store the dynamic freezing constraint $\mathbf{V}$ for AE with $M=8$ permutations for the aforementioned cases.
If permutations in $\BLTA(n-1,1)\cap\PL$ are used, regardless of the scenario, $\mathbf{V}$ is stable and shared among the $M=8$ $\adec$ instances. 
As a result, at the cost of a reduction in permutation diversity, storing $\mathbf{V}$ requires less memory, i.e., $D \ll N(N-K)$ bits. 
Table~\ref{tab:mem_code} shows that $D\leq29$ for Reed-Muller codes of length $N\leq256$.

The last two rows of Table~\ref{tab:mem_code} show the memory required by the fixed and dynamic scenarios, respectively, for permutations in $\mathsf{LTA}\not\subset\BLTA(n-1,1)\cap\PL$.
The matrix $\mathbf{V}$ is thus not stable and cannot be shared among the $M=8$ $\adec$ instances.
The dynamic scenario requires to store the $M$ transformed dynamic constraints; $MN(N-K)$ bits are required in total.
With regard to memory requirements, the dynamic scenario is an upper bound of the fixed scenario (highlighted in gray in Table~\ref{tab:mem_code}).
For the fixed scenario, the permutations are known in advance as well as the $8$ transformed matrices $\mathbf{V}_T$. 
The $\mathbf{V}_T$ matrices can thus be row-reduced before storage, reducing the memory requirements for the fixed case by over 97\%, as can be seen by comparing the last two rows of Table\,\ref{tab:mem_code}.

For example, the proposed constraint reduces the memory requirements by 98.8\% for the $\mathcal{R}(3,7)$ Reed-Muller code with respect to the fixed scenario having permutations in $\mathsf{LTA}$.
%=====================================================================================================
\section{Simulation results}\label{sec:sim_results}
\begin{table}[t]%b]
	\centering
	%\large
	\caption{Number of codewords with weight $w$ for $\mathcal{R}(3,7)$-based codes and various dynamic freezing constraint matrices} 
	\label{tab:weight}
	\begin{tabular}{c|c|c|c|c|c|}
		\cline{2-6}
		 & \multicolumn{5}{c|}{$w$}\\
		\cline{2-6}	
		 & 16 & 18 & 20 &22 & 24\\
		
		\hline
		\multicolumn{1}{|c|}{\cellcolor{matlab1}$\mathbf{V}_0$}&$94488$&0&0&0&74078592\\
		\hline
		\multicolumn{1}{|c|}{\cellcolor{matlab5}$\mathbf{V}_d$}&$20760$&0&$203420$&$>0$&$>0$\\
		\hline
		\multicolumn{1}{|c|}{\cellcolor{matlab2}$\mathbf{V}_D$}&$28632$&$13504$&$172800$&$>0$&$>0$\\
		\hline
	\end{tabular}
\end{table}
In this section, simulation results with $\mathcal{R}(3,7)$ as the underlying code are performed over the AWGN channel with BPSK modulation.
Three different stable dynamic freezing constraint matrices are compared: $\mathbf{V}_0$ having no dynamic frozen bits, $\mathbf{V}_D$ with all $D=22$ dynamic frozen bits, and $\mathbf{V}_d$ with $0< d=15<D$ dynamic frozen bits. 
The $d$ dynamic frozen bits correspond to all indices $k\in\mathcal{F}$ with $i_1(\mathbf{b}_k)=3$.

The codes are decoded under AE decoding with $M=8$ automorphisms in $\BLTA(n-1,1)\cap\PL$ and having SCL with $L=16$ as sub-decoders.
This decoder is denoted as AE-8-SCL-16 in the following.
%SCL is used as sub-decoder in AE decoding.
The code defined by $\mathbf{V}_0$ is the $\mathcal{R}(3,7)$ Reed-Muller code and AE decoding was shown to perform close to the ML bound \cite{geiselhart2020automorphism,ML_results}. 
By lowering the ML bound with the proposed freezing constraint, a gain is expected at the cost of a small memory overhead.

Figure~\ref{fig:sim} shows the error-correction performance of the three codes of interest under AE-8-SCL-16 decoding and their ML bound approximated with the truncated union bound (computed from the number of low-weight codewords). 
For reference, Figure~\ref{fig:sim} also shows 2 other pre-transformed polar codes: a PAC code \cite{ArikanPAC} and a 5G CA-polar code \cite{5GHuawei}.
Both are decoded with SCL-128, thus having similar complexity but higher latency with respect to AE-8-SCL-16.

Table~\ref{tab:weight} shows the number of low-weight codewords computed as in \cite{AdaptiveSCL} for codes defined by $\mathbf{V}_D,\, \mathbf{V}_d$ and $\mathbf{V}_0$. 
It can be seen that the $\mathcal{R}(3,7)$ Reed-Muller code has a large number of minimum weight ($w=16$) codewords with respect to the two other pre-transformed polar codes investigated.
The pre-transformed polar code defined by $\mathbf{V}_D$ has $28632$ codewords of weight $w=16$, a reduction of $70$\% with respect to $\mathcal{R}(3,7)$, but has generated $13504$ codewords of weight $w=18$. 
From Table~\ref{tab:weight}, the results for $\mathbf{V}_d$ show that freezing some of the dynamic frozen bits of $\mathbf{V}_D$ to $0$ allows to reduce the amount of codewords of weight $w=16$ by approximately 27\%, i.e., from $28632$ down to $20760$ (78\% with respect to $\mathcal{R}(3,7)$) while having $0$ codeword of weight $w=18$.
Comparing the union-bound curve for $\mathbf{V}_d$ against either of the other two bound curves in Figure~\ref{fig:sim}, it can be seen that a reduction in the number of low-weigth codewords lowers the ML bound.
\begin{figure}[t]
\centering
\begin{tikzpicture}
%   \pgfplotsset{
%     label style = {font=\fontsize{9pt}{7.2}\selectfont},
%     tick label style = {font=\fontsize{9pt}{7.2}\selectfont}
%   }

  \begin{semilogyaxis}[%
    width=\columnwidth,
    height=0.7\columnwidth,
    xmin=1.5, xmax=4,
    xtick={1.5,2,...,4},
    xlabel={$E_b/N_0,\,\mathrm{dB}$},
    xlabel style={yshift=0.4em},
    ymin=1e-6, ymax=1e-1,
    ylabel style={yshift=-0.1em},
    ylabel={BLER},
    yminorticks, xmajorgrids,
    ymajorgrids, yminorgrids,
    %legend pos=south west,
    legend style={at={(0.00,0.00)},anchor=south west},
    legend style={legend columns=1, font=\scriptsize, row sep=-0.5mm},
    legend style={fill=white, fill opacity=0.8, draw opacity=1,text opacity=1}, % for future use maybe ? %opacity of filling/border and inside text
    legend style={inner xsep=0pt, inner ysep=-1pt}, % TIGHTER BORDER !!!!!
    legend cell align={left}, % avoid centering in legend
    mark size=1.6pt, mark options=solid,
    ]
        
    \addplot[dashed,color=black, mark=triangle, line width=0.8pt, mark size=2.8pt]
    table[row sep=crcr]{%
    0 0.5935\\
    0.5 0.3436 \\
    1 0.2079 \\
    1.5 0.0844 \\
    2 0.0250 \\
    2.5 0.0049 \\
    3 4.8710e-04 \\
    3.5 5.3388e-05\\
    4  7.0000e-06\\
    };
    \addlegendentry{SCL-128 - 5G CA-PC}
    
    \addplot[dashed,color=black, mark=star, line width=0.8pt, mark size=2.8pt]
    table[row sep=crcr]{%
    0 0.4630 \\
    0.5 0.2481  \\
    1 0.1164  \\
    1.5 0.0433  \\
    2 0.0097  \\
    2.5 0.0014  \\
    3 2.0190e-04  \\
    3.5 1.1200e-05 \\
    4  6.0000e-07\\
    };
    \addlegendentry{SCL-128 - PAC code}
    \addplot[color=matlab1, mark=square, line width=0.8pt, mark size=2.8pt]
    table[row sep=crcr]{%
    0 0.5025 \\
    0.5 0.3205  \\
    1 0.1538 \\
    1.5 0.0559 \\
    2 0.0154  \\
    2.5 0.0040 \\
    3 6.4890e-04 \\
    3.5 9.7000e-05\\
    };
    \addlegendentry{ML bound - $\mathbf{V}_0$ \cite{ML_results}}
    
    \addplot[color=matlab1, mark=10-pointed star, line width=0.8pt, mark size=2.8pt]
    table[row sep=crcr]{%
    0 0.4651  \\
    0.5 0.3003   \\
    1 0.1271  \\
    1.5 0.0513 \\
    2 0.0213  \\
    2.5 0.0051  \\
    3 7.7153e-04  \\
    3.5 1.0273e-04 \\
    4 1.1600e-05\\
    };
    \addlegendentry{AE-8-SCL-16 - $\mathbf{V}_0$}
    
    \addplot[color=matlab2, mark=diamond, line width=0.8pt, mark size=2.8pt]
    table[row sep=crcr]{%
    2 0.0074  \\
    2.5  0.0015  \\
    3 2.4342e-04  \\
    3.5 3.2462e-05 \\
    4 3.4177e-06\\
    };
    \addlegendentry{Union bound - $\mathbf{V}_D$}
    
    \addplot[color=matlab2, mark=x, line width=0.8pt, mark size=2.8pt]
    table[row sep=crcr]{%
    0 0.5236   \\
    0.5 0.3135    \\
    1 0.1590   \\
    1.5 0.0521 \\
    2  0.0241 \\
    2.5  0.0039   \\
    3 4.5465e-04   \\
    3.5 5.9671e-05  \\
    4 4.8000e-06\\
    };
    \addlegendentry{AE-8-SCL-16 - $\mathbf{V}_D$}
    
    \addplot[color=matlab5, mark=o, line width=0.8pt, mark size=2.8pt]
    table[row sep=crcr]{%
    2 0.0059  \\
    2.5  0.0011  \\
    3 1.8069e-04  \\
    3.5 2.3654e-05 \\
    4 2.4646e-06\\
    };
    \addlegendentry{Union bound - $\mathbf{V}_d$}
    
    \addplot[color=matlab5, mark=+, line width=0.8pt, mark size=2.8pt]
    table[row sep=crcr]{%
    0 0.5000\\
    0.5  0.2747\\
    1  0.1433\\
    1.5  0.0499 \\
    2   0.0165 \\
    2.5  0.0032  \\
    3 5.2057e-04  \\
    3.5 3.9610e-05   \\
    4 3.8000e-06\\
    };
    \addlegendentry{AE-8-SCL-16 - $\mathbf{V}_d$}
   \addplot[dashed,color=matlab5, mark=+, line width=0.8pt, mark size=2.8pt]
    table[row sep=crcr]{%
    0 0.5714\\
    0.5   0.3636\\
    1   0.2304\\
    1.5   0.1016\\
    2    0.0333\\
    2.5   0.0080\\
    3 0.0016\\
    3.5  1.7857e-04\\
    4  1.6000e-05\\
    };
    \addlegendentry{SCL-16 - $\mathbf{V}_d$}
  \end{semilogyaxis}

\end{tikzpicture}%

    
\caption{Decoding performance of $\mathcal{R}(3, 7)$-based codes under AE-8-SCL-16 and SCL-16 against other codes and bounds.}
\label{fig:sim}
\end{figure}
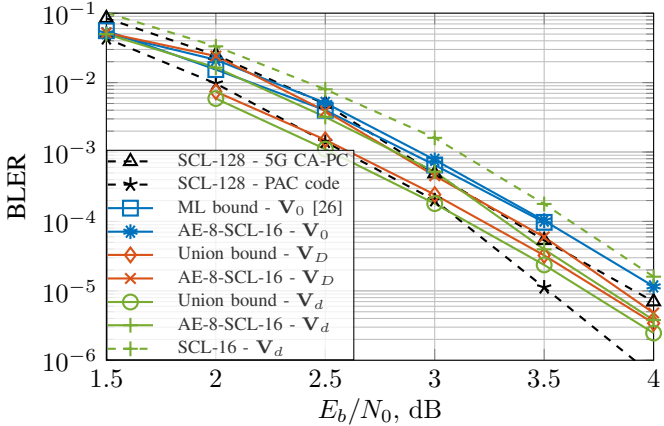

Figure~\ref{fig:sim} also shows that AE-8-SCL-16 decoding reaches the ML bound of $\mathcal{R}(3,7)$ while AE decoding with the proposed dynamic freezing constraints, i.e., $\mathbf{V}_D$ and $\mathbf{V}_d$, gains up to $0.25$\,dB with respect to this bound.
At the cost of a higher latency than AE-8-SCL-16, the 5G CA-polar code manages a similar error-correction performance under SCL-128.
Also under SCL-128, PAC codes exhibit the best performance, however the convolution inherent to PAC codes incurs more computations than our proposed constraint \eqref{eq:constraint} that only requires to read former decision values. 
The use of $8$ automorphisms improves the error-correction performance of SCL-16 by $0.3$\,dB for the code defined by $\mathbf{V}_d$. 
\section{Conclusions}
\noindent In this letter, we proposed several dynamic freezing constraint matrices for Reed-Muller codes that lower the ML bound and that are stable under permutation linear transformations.
The stability of the dynamic freezing constraints reduces the memory requirements by 99\% with respect to a dynamic AE decoder.
The decoding performance under AE-8-SCL-16 was shown to be $0.3$\,dB better than under SCL-16 while the ML bound of the $\mathcal{R}(3,7)$ Reed-Muller code is beaten by $0.25$\,dB.
%\textcolor{red}{Possible research extensions of this work include the optimization of stable dynamic freezing constraint matrices to minimize the number of minimum weight codewords.}
\vspace{-6mm}
\bibliographystyle{IEEEbib}
\bibliography{IEEEabrv,ConfAbrv,references}
% that's all folks
\end{document}